\documentclass[11pt,a4paper,final]{article}
\usepackage{geometry}
\usepackage{amsmath,amsfonts,amssymb,amsthm,version}
\usepackage{mathrsfs,fancybox,pifont}
\usepackage{graphicx}
\usepackage{url,hyperref}
\usepackage[notcite,notref]{showkeys}
\usepackage{color}
\usepackage{subfigure,multirow}
\usepackage{epstopdf}
\usepackage{cases}
\usepackage{mathtools}
\usepackage{algorithm,algorithmic}
\usepackage{authblk}
\usepackage{float} 
\usepackage{subfigure}



\geometry{a4paper,left=2cm,right=2cm,top=2cm,bottom=1.5cm}

\allowdisplaybreaks

\numberwithin{equation}{section}
\numberwithin{figure}{section}
\numberwithin{table}{section}

\theoremstyle{plain}
\newtheorem{theorem}{Theorem}[section]

\newtheorem{proposition}[theorem]{Proposition}

\theoremstyle{definition}
\newtheorem{definition}[theorem]{Definition}

\newtheorem{lemma}[theorem]{Lemma}

\theoremstyle{remark}


\title{Bell Polynomial Approach and Wronskian Technique \\to  Good Boussinesq Equation }
\author[1]{Xiaotian Dai\footnote{21210180083@m.fudan.edu.cn}}
\author[1]{Zhenyun Qin\footnote{zyqin@fudan.edu.cn}}
\affil[1]{School of Mathematical Sciences, Fudan University, Shanghai200433, PR China}

\date{}

\begin{document}

\maketitle

{\linespread{1.0} \selectfont
\begin{abstract}
The elementary and systematic binary Bell polynomial approach is applied to the good Boussinesq equation. The bilinear representation, $n$-soliton solutions, bilinear B\"acklund transformation, Lax pair and infinite conservation laws of the good Boussinesq equation are obtained directly. In addition, from the reduction conditions of the obtained Lax pairs, the $n$ order $\operatorname{Wronskian}$ determinant solution of the equation is also constructed.

  \medskip 
  \noindent{\bf Keywords}: good Boussinesq equation, binary Bell polynomials, Lax pairs, infinite conservation laws, $\operatorname{Wronskian}$ determinant solution

\end{abstract}
}
\linespread{1.0} \selectfont

\section{Introduction}

            As the soliton phenomena were first observed in 1834 \cite{Drazin} and Korteweg-de Vries (KdV) equation was solved by the inverse scattering method \cite{AAC}, finding soliton solutions of nonlinear PDEs has become one of the most exciting and extremely active areas of research. Investigation of integrability for a soliton equation can be regarded as a pretest and the first step of its exact solvability. Among the direct algebratic methods employed to study the integrability of soliton equations, the Hirota method has been a particularly powerful method \cite{Hirota}. Once a given soliton equation is written in bilinear form, on one hand, such results as multi-soliton solutions, quasi periodic wave solutions and other exact solutions are usually obtained, and on the other hand, the integrable properties of the soliton equation, such as the B\"acklund transformation (BT) and Lax pair can also be investigated. However, the construction of bilinear form and bilinear BT of the original soliton equation is not as one would wish. It relies on a particular skill in using appropriate dependent variable transformation, exchange formulas and bilinear identities.

        Recently, Lambert and his co-workers have proposed an alternative procedure based on the use of Bell polynomials to obtain bilinear forms, bilinear BTs, Lax pairs for soliton equations in a lucid and systematic way \cite{GLN, LLS, LS}.
         Fan developed this method to find infinite conservation laws of soliton equations and proposed the super Bell polynomials \cite{egfan, FH}. Ma systematically analyzed the connection between Bell polynomials and new bilinear equation \cite{MaWX}.
         Chen et al. proposed the Maple automatic program to construct B\"acklund transformation, Lax pairs and the infinite conservation laws for nonlinear evolution equations \cite{MWCY}. 

         In Hirota's method, the final step is to prove the general $n$-soliton \cite{Hirota1971}. However, this proof can be quite difficult and since the early 1980s, it has been more usual to re-express the solutions found in terms of a $\operatorname{Wronskian}$  or Grammian determinant, or Pfaffians of several types, and verify the solution in that form \cite{Hirota}. The $\operatorname{Wronskian}$ technique provides direct verifications of solutions to bilinear equations by taking the advantage that special structure of a $\operatorname{Wronskian}$ contributes simple forms of its derivatives \cite{FN}. However, the key element of this technique is to figure out a system of linear differential conditions, which ensures that the $\operatorname{Wronskian}$ formulation presents solutions of the Hirota bilinear form. 

         In using the $\operatorname{Wronskian}$ technique to solve the KdV equation \cite{MY, Matveev}
         \begin{equation*}
              {{u}_{t}}-6u{{u}_{x}}-{{u}_{xxx}}=0,
         \end{equation*}
         one usually starts from its Lax pair
         \begin{equation*}
             {{\varphi }_{xx}}=\left( \lambda -u \right)\varphi , \quad {{\varphi }_{t}}={{\varphi }_{xxx}}+3\left( \lambda +u \right){{\varphi }_{x}}. 
         \end{equation*}
         Choose $u=0, \lambda =\frac{{{k}^{2}}}{4}$, then its Lax pair can be reduced to 
        \begin{equation*}
            {{\varphi }_{xx}}=\frac{1}{4}{{k}^{2}}\varphi , \quad {{\varphi }_{t}}=4{{\varphi }_{xxx}}.
         \end{equation*}
        Therefore, choose  ${{\varphi }_{j}}\in {{C}^{\infty }}\left( \Omega  \right)$, satisfying
         \begin{equation*}
            {{\varphi }_{j,xx}}=\frac{k_{j}^{2}}{4}{{\varphi }_{j}}, \quad {{\varphi }_{j,t}}=4{{\varphi }_{j,xxx}}, \quad j=1,2,\cdots ,n, 
         \end{equation*}
         which constitutes the $\operatorname{Wronskian}$ condition of KdV equation.      

       Inspired by Ma's work \cite{MY}, the purpose of this paper is to extend the binary Bell polynomial approach and $\operatorname{Wronskian}$ technique to the good Boussinesq equation:
        \begin{equation}\label{eq101}
        {{u}_{tt}}-{{u}_{xx}}+{{\left( {{u}^{2}} \right)}_{xx}}+\frac{1}{3}{{u}_{xxxx}}=0,
        \end{equation}
        which is obviously equivalent to the following well-posed Boussinesq equation under the scaling transformation $u\to U, x\to \frac{1}{\sqrt{3}}X, t\to \frac{1}{\sqrt{3}}T$.
        \begin{equation}\label{eq102}
        {{U}_{TT}}-{{U}_{XX}}+{{U}_{XXXX}}+{{\left( {{U}^{2}} \right)}_{XX}}=0.
        \end{equation}

        Equation (\ref{eq102}) is of fundamental physical interest, and it occurs in a wide variety of physical systems \cite{Whitham, Xu, Karpman, Turitsyn}, because it describes the lowest order (in terms of wave amplitudes) nonlinear effects in the evolution of perturbations with the dispersion relation close to that for sound waves. It also appears in the problems dealing with propagations of nonlinear waves in a medium with positive dispersion, for example, electromagnetic waves interacting with transversal optical phonons in nonlinear dielectrics \cite{Xu}, magnetosound waves in plasmas \cite{Karpman} and magnetoelastic waves in antiferromagnets \cite{Turitsyn}. Furthermore, it is known that for the transonic speed perturbations, by neglecting the interaction of waves moving in the opposite directions.

        So far, the soliton solutions and multicollapse solutions were investigated by using Darboux transformations \cite{Matveev} and Hirota's method \cite{HirotaR}. Li et al. \cite{Chunxia} have obtained solitons, negations, positions and complexitons for Eq.(\ref{eq101}). However, not much work has been done on the integrability  of Eq.(\ref{eq101}).
        
        The problem we concern in this paper is to study the integrability of Eq.(\ref{eq101}), and additionally figure out the usual method for constructing the $\operatorname{Wronskian}$ condition.
        
        This paper is organized as follows. In Section2, we briefly present necessary notations on binary Bell polynomials that will be used in this paper. These results will then be applied to construct the bilinear representations, $n$-soliton solutions, bilinear BT, Lax pair and infinite conservation laws to Eq.(\ref{eq101}) in Section 3. In Section 4, we construct $\operatorname{Wronskian}$ condition of Eq.(\ref{eq101}) by using the Lax pairs obtained in Section 3.3, and the solutions in $\operatorname{Wronskian}$ form are verified. Finally, Section 5 presents our conclusions.

\section{Binary Bell Polynomials}
        To make our presentation easy understanding and self-contained, we first fix some necessary notations on the Bell polynomials, the details refer to Bell, Lambert and Gilson's work \cite{GLN, LLS, LS}.
        
        \begin{definition}
        Suppose that $y=y\left( {{x}_{1}},\cdots ,{{x}_{n}} \right)$ is a ${{C}^{\infty }}$ function with $n$ independent variables and we denote 
            \begin{equation*}
                {{y}_{{{r}_{1}}{{x}_{1}},\cdots ,{{r}_{l}}{{x}_{l}}}}=\partial _{{{x}_{1}}}^{{{r}_{1}}}\cdots \partial _{{{x}_{l}}}^{{{r}_{l}}}y, \quad {{r}_{1}}=0,\cdots ,{{n}_{1}},\cdots , {{r}_{l}}=0,\cdots ,{{n}_{l}},
            \end{equation*}
        where $l$ denotes arbitrary integers, then 
        \begin{equation}\label{eq201}
                 {{Y}_{{{n}_{1}}{{x}_{1}},\cdots ,{{n}_{l}}{{x}_{l}}}}\left( y \right)\equiv {{Y}_{{{n}_{1}},\cdots ,{{n}_{l}}}}\left( {{y}_{{{r}_{1}}{{x}_{1}},\cdots ,{{r}_{l}}{{x}_{l}}}} \right)={{e}^{-y}}\partial _{{{x}_{1}}}^{{{n}_{1}}}\cdots \partial _{{{x}_{l}}}^{{{n}_{l}}}{{e}^{y}}  
                \end{equation}
        is a polynomial in the partial derivatives of $y$ with respect to ${{x}_{1}},\cdots ,{{x}_{l}}$, which called a multi-dimensional Bell polynomial. For the special case $f=f\left( x,t \right)$, the associated two-dimensional Bell polynomials defined by (\ref{eq201}) read
        \begin{equation}
            {{Y}_{x,t}}\left( f \right)={{f}_{x,t}}+{{f}_{x}}{{f}_{t}}, \quad {{Y}_{2x,t}}\left( f \right)={{f}_{2x,t}}+{{f}_{2x}}{{f}_{t}}+2{{f}_{x,t}}{{f}_{x}}+f_{x}^{2}{{f}_{t}}, \cdots
        \end{equation}
        \end{definition}

        By virtue of the above multi-dimensional Bell polynomials, the multi-dimensional binary Bell polynomials can be defined as follows.
        \begin{definition}
        Suppose that $w=f+g$, $v=f-g$, then 
            \begin{equation}
               {{\mathcal{Y}}_{{{n}_{1}}{{x}_{1}},\cdots ,{{n}_{l}}{{x}_{l}}}}(v,w)={{Y}_{{{n}_{1}},\cdots ,{{n}_{l}}}}\left( y \right)\left| _{{{y}_{{{r}_{1}}{{x}_{1}},\cdots ,{{r}_{l}}{{x}_{l}}}}=\left\{ \begin{matrix}
                {{v}_{{{r}_{1}}{{x}_{1,\cdots ,}}{{r}_{l}}{{x}_{l}}}}, {{r}_{1}}+\cdots +{{r}_{l}}\mbox { is odd}  \\
               {{w}_{{{r}_{1}}{{x}_{1}},\cdots {{r}_{l}}{{x}_{l}}}},  {{r}_{1}}+\cdots +{{r}_{l}}\mbox { is even}  \\
              \end{matrix}\right.} \right.           
          \end{equation}
          where the vertical line means that the elements on the left-hand are chosen according to the rule on the right-hand side.
        \end{definition}

        For example, the first few lowest-order binary Bell polynomials are 
        \begin{equation}
        \begin{aligned}
            &{{\mathcal{Y}}_{x}}\left( v,w \right)={{v}_{x}}, \quad {{\mathcal{Y}}_{2x}}\left( v,w \right)={{w}_{2x}}+v_{x}^{2}, \quad {{\mathcal{Y}}_{x,t}}\left( v,w \right)={{w}_{x,t}}+{{v}_{x}}{{v}_{t}}, \\
            & {{\mathcal{Y}}_{3x}}\left( v,w \right)={{v}_{3x}}+3{{w}_{2x}}{{v}_{x}}+v_{x}^{3},  \quad {{\mathcal{Y}}_{2x,t}}\left( v,w \right)={{v}_{2x,t}}+2{{w}_{x,t}}{{v}_{x}}+v_{x}^{2}{{v}_{t}}+{{w}_{2x}}{{v}_{t}}, \cdots
        \end{aligned}
        \end{equation}

         \begin{proposition}\label{pr203}
            The relations between the binary Bell polynomials and the standard Hirota D-operators can be given by the identity 
            \begin{equation}\label{eq205}
                {{\mathcal{Y}}_{{{n}_{1}}{{x}_{1}},\cdots ,{{n}_{l}}{{x}_{l}}}}\left( v=\ln {F}/{G,w=\ln FG}\; \right)={{\left( FG \right)}^{-1}}D_{{{x}_{1}}}^{{{n}_{1}}}\cdots D_{{{x}_{l}}}^{{{n}_{l}}}F\cdot G,
            \end{equation}
            where ${{n}_{1}}+{{n}_{2}}+\cdots +{{n}_{l}}\ge 1,$  and the Hirota D-operators are defined by
            \begin{equation}
                D_{{{x}_{1}}}^{{{n}_{1}}}\cdots D_{{{x}_{l}}}^{{{n}_{l}}}F\cdot G={{\left( {{\partial }_{{{x}_{1}}}}-{{\partial }_{x_{1}^{'}}} \right)}^{{{n}_{1}}}}\cdots {{\left( {{\partial }_{{{x}_{l}}}}-{{\partial }_{x_{l}^{'}}} \right)}^{{{n}_{1}}}}F\left( {{x}_{1}},\cdots ,{{x}_{l}} \right)G\left( x_{1}^{'},\cdots ,x_{l}^{'} \right)\left| _{x_{1}^{'}={{x}_{1}},\cdots ,x_{l}^{'}={{x}_{l}}}. \right.
            \end{equation}
         \end{proposition}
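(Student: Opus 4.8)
The plan is to establish \eqref{eq205} at the level of exponential generating functions in auxiliary formal parameters $\epsilon_1,\dots,\epsilon_l$, one attached to each variable $x_1,\dots,x_l$, and then read off the identity by comparing coefficients of $\epsilon_1^{n_1}\cdots\epsilon_l^{n_l}/(n_1!\cdots n_l!)$ on both sides. Throughout, $x\pm\epsilon$ abbreviates $(x_1\pm\epsilon_1,\dots,x_l\pm\epsilon_l)$.

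First I would record the generating function of the ordinary multi-dimensional Bell polynomials. Multiplying the defining relation \eqref{eq201} by $\prod_j\epsilon_j^{n_j}/n_j!$, summing over all multi-indices, and using $\sum_{n_j}\epsilon_j^{n_j}\partial_{x_j}^{n_j}/n_j!=e^{\epsilon_j\partial_{x_j}}$ (the shift operator), the sum collapses to
\[
\sum_{n_1,\dots,n_l}\frac{\epsilon_1^{n_1}\cdots\epsilon_l^{n_l}}{n_1!\cdots n_l!}\,Y_{n_1,\dots,n_l}(y)
=\exp\!\bigl(y(x+\epsilon)-y(x)\bigr),
\]
and the exponent $y(x+\epsilon)-y(x)=\sum_{r_1+\cdots+r_l\ge1}\frac{\epsilon_1^{r_1}\cdots\epsilon_l^{r_l}}{r_1!\cdots r_l!}\,y_{r_1x_1,\dots,r_lx_l}$ is exactly the power series whose coefficients enter the substitution rule defining $\mathcal Y$. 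Performing that substitution amounts to splitting this exponent according to the parity of $r_1+\cdots+r_l$ and replacing the odd part by the same series in $v$ and the even part by the same series in $w$. Using the parity identities, valid for any $C^\infty$ function $\phi$,
\begin{align*}
\tfrac12\bigl[\phi(x+\epsilon)-\phi(x-\epsilon)\bigr]&=\sum_{r_1+\cdots+r_l\ \mathrm{odd}}\frac{\epsilon_1^{r_1}\cdots\epsilon_l^{r_l}}{r_1!\cdots r_l!}\,\phi_{r_1x_1,\dots,r_lx_l},\\
\tfrac12\bigl[\phi(x+\epsilon)+\phi(x-\epsilon)\bigr]-\phi&=\sum_{\substack{r_1+\cdots+r_l\ \mathrm{even}\\ r_1+\cdots+r_l\ge2}}\frac{\epsilon_1^{r_1}\cdots\epsilon_l^{r_l}}{r_1!\cdots r_l!}\,\phi_{r_1x_1,\dots,r_lx_l},
\end{align*}
one gets the generating function of the binary Bell polynomials,
\[
\sum\frac{\epsilon_1^{n_1}\cdots\epsilon_l^{n_l}}{n_1!\cdots n_l!}\,\mathcal Y_{n_1x_1,\dots,n_lx_l}(v,w)
=\exp\!\Bigl(\tfrac12\bigl[v(x+\epsilon)-v(x-\epsilon)\bigr]+\tfrac12\bigl[w(x+\epsilon)+w(x-\epsilon)\bigr]-w(x)\Bigr).
\]

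Next I would insert the logarithmic substitution $v=\ln(F/G)$, $w=\ln(FG)$, equivalently $f=\ln F=\tfrac12(w+v)$ and $g=\ln G=\tfrac12(w-v)$. A short rearrangement of the exponent above using $v=f-g$, $w=f+g$ collapses it to $f(x+\epsilon)+g(x-\epsilon)-f(x)-g(x)$, so the right-hand side of the last display becomes $F(x+\epsilon)G(x-\epsilon)/\bigl(F(x)G(x)\bigr)$. On the Hirota side, since $D_{x_j}$ acts as $\partial_{x_j}-\partial_{x_j'}$ followed by the restriction $x_j'=x_j$, the generating function is $\sum\prod_j\epsilon_j^{n_j}/n_j!\;D_{x_1}^{n_1}\cdots D_{x_l}^{n_l}F\cdot G=\prod_j e^{\epsilon_j(\partial_{x_j}-\partial_{x_j'})}F(x)G(x')\big|_{x'=x}=F(x+\epsilon)G(x-\epsilon)$. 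Dividing this by $FG$ gives the same series as for the binary Bell polynomials, and matching the coefficient of $\epsilon_1^{n_1}\cdots\epsilon_l^{n_l}$ for each multi-index with $n_1+\cdots+n_l\ge1$ yields \eqref{eq205}.

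The genuinely delicate point is the middle step: keeping the multi-index bookkeeping consistent while splitting the exponent by the parity of $r_1+\cdots+r_l$, and verifying that the recombination of the odd-order $v$-series with the even-order $w$-series telescopes exactly into $f(x+\epsilon)+g(x-\epsilon)$ with no residual zeroth-order term — this is where the precise form of the substitution rule (and the absence of an $r_1=\cdots=r_l=0$ contribution) must be used carefully. The remaining manipulations are purely formal properties of the shift operator and Taylor expansion, and the convergence issues are vacuous because everything is read off coefficientwise.
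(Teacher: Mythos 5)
Your argument is correct. Note first that the paper itself offers no proof of Proposition \ref{pr203}: it is quoted as a known identity with the details deferred to the cited works of Gilson, Lambert, Nimmo and Springael, so there is no internal proof to compare against. Your generating-function derivation is essentially the standard argument from that literature, and it is complete: the identity $\sum_{n}\frac{\epsilon^{n}}{n!}Y_{n}(y)=\exp\bigl(y(x+\epsilon)-y(x)\bigr)$ holds as an identity of formal power series in $\epsilon$ whose coefficients are polynomials in the formal indeterminates $y_{r_1x_1,\dots,r_lx_l}$, so the odd/even substitution defining $\mathcal Y$ does commute with passing to the generating function; the parity split of the Taylor exponent via $\tfrac12[\phi(x+\epsilon)\mp\phi(x-\epsilon)]$ is exactly right (the sign $(-\epsilon)^{r}=(-1)^{r_1+\cdots+r_l}\epsilon^{r}$ is what makes the split go by \emph{total} order, matching the definition); and the recombination $f(x+\epsilon)+g(x-\epsilon)-f(x)-g(x)$, with the zeroth-order term correctly removed by the $-w(x)$, matches $F(x+\epsilon)G(x-\epsilon)/(FG)$, which is the classical generating function of the Hirota operators. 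Matching coefficients for $n_1+\cdots+n_l\ge1$ gives \eqref{eq205}. The only point I would flag is cosmetic: you could remark that the identity also holds trivially for the zero multi-index (both sides equal $1$), so the restriction $n_1+\cdots+n_l\ge1$ in the statement is not where any content lies.
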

         In particular, when $F=G$, formula (\ref{eq205}) can be rewritten as 
         \begin{equation}\label{eq207}
             {{G}^{-2}}D_{{{x}_{1}}}^{{{n}_{1}}}\cdots D_{{{x}_{l}}}^{{{n}_{l}}}G\cdot G  ={\mathcal{Y}_{{{n}_{1}}{{x}_{1}},\cdots ,{{n}_{l}}{{x}_{l}}}}\left( 0,q=2\ln G \right) 
              = \left\{ \begin{matrix}
   0&, &{{n}_{1}}+\cdots +{{n}_{l}}\mbox{ is odd},  \\
   {{P}_{{{n}_{1}}{{x}_{1}},\cdots ,{{n}_{l}}{{x}_{l}}}}\left( q \right)&,& {{n}_{1}}+\cdots +{{n}_{l}}\mbox{ is even}.  \\
\end{matrix} \right.
         \end{equation}
        which is also called a P-polynomial
        \begin{equation}
            {{P}_{{{n}_{1}}{{x}_{1}},\cdots ,{{n}_{l}}{{x}_{l}}}}\left( q \right)={{\mathcal{Y}}_{{{n}_{1}}{{x}_{1}},\cdots ,{{n}_{l}}{{x}_{l}}}}\left( 0,q=2\ln F \right).
        \end{equation}
        which vanishes unless $\sum\limits_{i=1}^{l}{{{n}_{i}}}$ is even.

        The first few P-polynomials and  are
        \begin{equation}\label{eq209}
            {{P}_{x,t}}\left( q \right)={{q}_{x,t}}, \quad
            {{P}_{2x}}\left( q \right)={{q}_{2x}}, \quad
            {{P}_{3x,t}}\left( q \right)={{q}_{3x,t}}+3{{q}_{x,t}}{{q}_{2x}}, \quad
            {{P}_{4x}}\left( q \right)={{q}_{4x}}+3q_{2x}^{2}, \cdots
        \end{equation}
        Formula (\ref{eq205}) and (\ref{eq207}) will prove particularly useful in connecting nonlinear equations with their corresponding bilinear equations. This means that once a nonlinear equation can be expressed as a linear combination of P-polynomials, then its bilinear equation can be established directly.

         \begin{proposition}\label{pr204}
            The binary Bell polynomials ${{\mathcal{Y}}_{{{n}_{1}}{{x}_{1}},\cdots ,{{n}_{l}}{{x}_{l}}}}(v,w)$ can be written as a combination of P-polynomials and Y-polynomials
             \begin{equation}\label{eq210}
     \begin{aligned}        
	      {{\left( FG \right)}^{-1}}D_{{{x}_{1}}}^{{{n}_{1}}}\cdots D_{{{x}_{l}}}^{{{n}_{l}}}F\cdot G &= {{\mathcal{Y}}_{{{n}_{1}}{{x}_{1}},\cdots ,{{n}_{l}}{{x}_{l}}}}\left( v,w \right)\left| _{v=\ln \frac{F}{G},w=\ln FG} \right.  \\
		 & ={{\mathcal{Y}}_{{{n}_{1}}{{x}_{1}},\cdots ,{{n}_{l}}{{x}_{l}}}}\left( v,v+q \right)\left| _{v=\ln \frac{F}{G},q=2\ln G} \right. \\ 
 & =\sum\limits_{{{r}_{1}}=0}^{{{n}_{1}}}{\cdots }\sum\limits_{{{r}_{l}}=0}^{{{n}_{l}}}{\prod\limits_{i=1}^{l}{\left( \begin{matrix}
   {{n}_{i}}  \\
   {{r}_{i}}  \\
\end{matrix} \right)}}{{P}_{{{r}_{1}}{{x}_{1}},\cdots ,{{r}_{l}}{{x}_{l}}}}\left( q \right)\cdot {{Y}_{\left( {{n}_{1}}-{{r}_{1}} \right){{x}_{1}},\cdots ,\left( {{n}_{l}}-{{r}_{l}} \right){{x}_{l}}}}\left( v \right), 
    \end{aligned}
       \end{equation}
       where $\sum\limits_{i=1}^{l}{{{n}_{i}}}$ is even.
         \end{proposition}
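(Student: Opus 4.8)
\noindent\emph{Proof strategy.}
The plan is to establish the polynomial identity (\ref{eq210}) at the level of exponential generating functions, and then to specialize it to $v=\ln(F/G)$, $q=2\ln G$ by means of Proposition~\ref{pr203}. The starting point is that, by their very definition, the binary Bell polynomials $\mathcal{Y}_{n_1x_1,\ldots,n_lx_l}(v,w)$ are the ordinary multi-dimensional Bell polynomials (\ref{eq201}) with their arguments relabelled, so it suffices to control the generating function of the latter. Introducing formal indeterminates $t_1,\ldots,t_l$ and Taylor expanding $e^{y(x_1+t_1,\ldots,x_l+t_l)}$ about $(x_1,\ldots,x_l)$, the defining relation $Y_{n_1,\ldots,n_l}(y)=e^{-y}\partial_{x_1}^{n_1}\cdots\partial_{x_l}^{n_l}e^{y}$ gives at once
\begin{equation*}
\sum_{n_1,\ldots,n_l\ge 0}Y_{n_1,\ldots,n_l}(y)\,\frac{t_1^{n_1}\cdots t_l^{n_l}}{n_1!\cdots n_l!}
=\exp\!\Big(\sum_{r_1+\cdots+r_l\ge 1}y_{r_1x_1,\ldots,r_lx_l}\,\frac{t_1^{r_1}\cdots t_l^{r_l}}{r_1!\cdots r_l!}\Big),
\end{equation*}
the degree-zero term of $y(x+t)-y(x)$ being killed by the prefactor $e^{-y}$. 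Since this is a polynomial identity in the symbols $y_{r_1x_1,\ldots,r_lx_l}$, into which one may substitute freely, performing the odd/even substitution that defines $\mathcal{Y}$ shows that the same identity holds with $Y_{n_1,\ldots,n_l}$ replaced by $\mathcal{Y}_{n_1x_1,\ldots,n_lx_l}(v,w)$ and with $y_{r_1x_1,\ldots,r_lx_l}$ replaced in the exponent by $v_{r_1x_1,\ldots,r_lx_l}$ when $r_1+\cdots+r_l$ is odd and by $w_{r_1x_1,\ldots,r_lx_l}$ when $r_1+\cdots+r_l$ is even.

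Next I would set $q=w-v$, so that $w_{r_1x_1,\ldots,r_lx_l}=v_{r_1x_1,\ldots,r_lx_l}+q_{r_1x_1,\ldots,r_lx_l}$ for every multi-index with $r_1+\cdots+r_l\ge 1$, and split the exponent into a sum over all multi-indices carrying the $v$-derivatives and a sum over the even multi-indices carrying the $q$-derivatives. Writing $r=(r_1,\ldots,r_l)$, $|r|=r_1+\cdots+r_l$, $t^{r}=t_1^{r_1}\cdots t_l^{r_l}$ and $r!=r_1!\cdots r_l!$, this reads
\begin{equation*}
\sum_{n_1,\ldots,n_l\ge0}\mathcal{Y}_{n_1x_1,\ldots,n_lx_l}(v,v+q)\,\frac{t^{n}}{n!}
=\exp\!\Big(\sum_{|r|\ge1}v_{r_1x_1,\ldots,r_lx_l}\frac{t^{r}}{r!}\Big)\cdot\exp\!\Big(\sum_{\substack{|r|\ge1\\ |r|\text{ even}}}q_{r_1x_1,\ldots,r_lx_l}\frac{t^{r}}{r!}\Big).
\end{equation*}
The first factor is the generating function of the polynomials $Y_{n_1x_1,\ldots,n_lx_l}(v)$, while the second is the generating function of the $P$-polynomials: setting $v=0$ in the binary generating function above leaves exactly $\exp\big(\sum_{|r|\text{ even}}q_{r_1x_1,\ldots,r_lx_l}t^{r}/r!\big)$, whose coefficients are $P_{n_1x_1,\ldots,n_lx_l}(q)=\mathcal{Y}_{n_1x_1,\ldots,n_lx_l}(0,q)$, and since this exponent is supported on monomials $t^{r}$ with $|r|$ even, so is its exponential, which is precisely the statement that $P_{n_1x_1,\ldots,n_lx_l}$ vanishes unless $\sum n_i$ is even. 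Forming the multivariate Cauchy product of the two series and extracting the coefficient of $t^{n}/n!$ yields
\begin{equation*}
\mathcal{Y}_{n_1x_1,\ldots,n_lx_l}(v,v+q)=\sum_{r_1=0}^{n_1}\cdots\sum_{r_l=0}^{n_l}\prod_{i=1}^{l}\binom{n_i}{r_i}\,P_{r_1x_1,\ldots,r_lx_l}(q)\,Y_{(n_1-r_1)x_1,\ldots,(n_l-r_l)x_l}(v).
\end{equation*}
Finally, specializing $v=\ln(F/G)$ and $q=2\ln G$, using $\ln(FG)=\ln(F/G)+2\ln G$, and invoking Proposition~\ref{pr203} to rewrite the left-hand side as $(FG)^{-1}D_{x_1}^{n_1}\cdots D_{x_l}^{n_l}F\cdot G$, one obtains (\ref{eq210}).

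The whole argument lives in the ring of formal power series in $t_1,\ldots,t_l$, so no convergence question arises; what demands attention is the combinatorial bookkeeping. One must check that the degree-zero term really drops out of every exponent, so that replacing $w$ by $v+q$ introduces no spurious constant; that the $P$-polynomial factor is correctly identified, including the support-on-even-total-degree property that matches the vanishing of $P_{n_1x_1,\ldots,n_lx_l}$ for odd $\sum n_i$; and that the Cauchy product is reorganized with the full weight $\prod_i\binom{n_i}{r_i}$ rather than a single binomial coefficient. I expect this bookkeeping, together with a precise formulation of the binary generating function, to be essentially the only work. An alternative proof by induction on $|n|$ exploiting the recursive structure of the (binary) Bell polynomials is possible but noticeably more cumbersome, and I would prefer the generating-function route above.
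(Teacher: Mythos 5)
Your proposal is correct. Note that the paper itself states this proposition without proof, quoting it as a known result from the Bell-polynomial literature (Gilson--Lambert--Nimmo, Lambert--Springael), so there is no in-paper argument to compare against; the generating-function route you take is in fact the standard derivation in those references. Your key steps all check out: the identity $\sum_{n}Y_{n}(y)\,t^{n}/n!=\exp\bigl(\sum_{|r|\ge 1}y_{r_{1}x_{1},\ldots,r_{l}x_{l}}\,t^{r}/r!\bigr)$ follows from $e^{-y(x)}e^{y(x+t)}$ with the zero-order term cancelled; it is a polynomial identity in the jet variables, so the odd/even substitution defining $\mathcal{Y}$ is legitimate; replacing $w$ by $v+q$ factors the exponential into the $Y(v)$ generating function times the even-supported $P(q)$ generating function, and the Cauchy product delivers the full product of binomial coefficients $\prod_{i}\binom{n_{i}}{r_{i}}$. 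The only cosmetic remark is that your argument actually proves the expansion for every multi-index $n$ with $n_{1}+\cdots+n_{l}\ge 1$, not merely for even total order; the parity restriction in the statement is just the case the paper subsequently needs, so nothing is lost.
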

         
         \begin{proposition}\label{pr205}
             In order to obtain the Lax pairs of corresponding NLEEs, we introduce the Hopf-Cole transformation $v=\ln \psi $, then the Y-polynomials can be written as
            \begin{equation}\label{eq211}
           {{Y}_{{{n}_{1}}{{x}_{1}},\cdots ,{{n}_{l}}{{x}_{l}}}}\left( v \right)\left| _{v=\ln \psi } \right.=\frac{{{\psi }_{{{n}_{1}}{{x}_{1}},\cdots ,{{n}_{l}}{{x}_{l}}}}}{\psi },
             \end{equation}
        which provides the shortest way to the associated Lax systems of NLEEs.
         \end{proposition}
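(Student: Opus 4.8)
The plan is to read the identity directly off the defining formula (\ref{eq201}) for the multi-dimensional Bell polynomial. By the Definition in Section~2, $Y_{n_1 x_1,\cdots,n_l x_l}(y)$ is the \emph{universal} polynomial in the abstract symbols $y_{r_1 x_1,\cdots,r_l x_l}$ that is characterized by the operator identity
\begin{equation*}
Y_{n_1 x_1,\cdots,n_l x_l}(y)=e^{-y}\,\partial_{x_1}^{n_1}\cdots\partial_{x_l}^{n_l}\,e^{y}.
\end{equation*}
Since this holds for every $C^{\infty}$ function $y$, I would simply specialize it to $y=v=\ln\psi$, where $\psi$ is any locally positive $C^{\infty}$ function. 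For this choice $e^{y}=e^{\ln\psi}=\psi$ and $e^{-y}=\psi^{-1}$, whence
\begin{equation*}
Y_{n_1 x_1,\cdots,n_l x_l}(v)\big|_{v=\ln\psi}
=\psi^{-1}\,\partial_{x_1}^{n_1}\cdots\partial_{x_l}^{n_l}\,\psi
=\frac{\psi_{n_1 x_1,\cdots,n_l x_l}}{\psi},
\end{equation*}
which is exactly (\ref{eq211}).

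The only point worth spelling out is the meaning of the restriction $|_{v=\ln\psi}$: it instructs one to substitute the actual partial derivatives of $\ln\psi$ into the fixed polynomial $Y_{n_1 x_1,\cdots,n_l x_l}$ in the symbols $y_{r_1 x_1,\cdots,r_l x_l}$. To confirm that this formal substitution reproduces $\psi^{-1}\partial_{x_1}^{n_1}\cdots\partial_{x_l}^{n_l}\psi$, one can run an induction on $n_1+\cdots+n_l$: differentiating $e^{-y}\partial^{\cdots}e^{y}$ with respect to one more $x_i$ and applying the product rule produces precisely the recursion that defines the Bell polynomials, so the polynomial obtained from the operator side and the one obtained by formal substitution coincide term by term. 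Equivalently — and this is the route I would actually take in the write-up — once $y$ is taken to be the concrete function $\ln\psi$, formula (\ref{eq201}) is already an equality of smooth functions, so nothing beyond plugging in $y=\ln\psi$ is needed.

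I do not expect a genuine obstacle here: the statement is essentially a restatement of the defining property (\ref{eq201}) under the Hopf-Cole substitution $v=\ln\psi$. The sole implicit hypothesis is that $\psi$ be nonvanishing on the domain in question so that $\ln\psi$ makes sense; since (\ref{eq211}) is a rational identity in $\psi$ and its derivatives, it then persists wherever $\psi\neq0$, and in the applications of Section~3 this is harmless, because $\psi$ plays the role of a Lax eigenfunction and (\ref{eq211}) is used only to convert the $Y$-polynomial form of the linearized system into the genuinely linear equations $\psi_{n_1 x_1,\cdots}=(\cdots)\psi$ that constitute the Lax pair.
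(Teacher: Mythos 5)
Your argument is correct and is precisely the justification the paper leaves implicit: Proposition \ref{pr205} is stated without proof because it follows immediately from the defining identity (\ref{eq201}) upon setting $y=v=\ln\psi$, so that $e^{y}=\psi$ and $e^{-y}\partial_{x_1}^{n_1}\cdots\partial_{x_l}^{n_l}e^{y}=\psi^{-1}\psi_{n_1x_1,\cdots,n_lx_l}$. Your additional remarks on the meaning of the substitution and the nonvanishing of $\psi$ are sound but not needed beyond this one-line specialization.
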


    \section{Integrability of good Boussinesq equation}
    \subsection{Biinear representation}
    Using the homogeneous balance principle, in order to make the nonlinear term $u_x^2$ (or $u {{u}_{xx}}$) and the highest derivative term ${{u} _ {xxxx}} $ in the equation (\ref{eq101})  be partially balanced, we suppose that the highest degree of derivetives of $u\left( x,t \right)$ with respect to $x, t$ is $m$, then the highest degree of derivetives of ${{u}_{xxxx}}$ and $u_x^2$ with respect to $x, t$ is $m+4$ and $2m+2$ respectively. Let
    \begin{equation}
        m+4=2m+2,
    \end{equation}
    we get $m=2$, which shows that a dimensionless field $q$ can be related to the field $u$ by setting
    \begin{equation}\label{eq302}
        u=c{{q}_{xx}},
    \end{equation}
    with $c$ being a free constant to be the appropriate choice such that equation (\ref{eq101}) links with P-polynomials. Substituting (\ref{eq302}) into (\ref{eq101}), we can write the resulting equation in the form
    \begin{equation}\label{eq303}
        c{{q}_{xxtt}}-c{{q}_{xxxx}}+{{\left( {{c}^{2}}q_{xx}^{2} \right)}_{xx}}+\frac{1}{3}c{{q}_{6x}}=0.
    \end{equation}\label{eq304}
    Further integrating Eq.(\ref{eq303}) with respect to x twice yields
    \begin{equation}
        E\left( q \right)\equiv{{q}_{tt}}-{{q}_{xx}}+cq_{xx}^{2}+\frac{1}{3}{{q}_{4x}}=0.
    \end{equation}
    Comparing the third term of this equation with the formula (\ref{eq209}) implies that we should require $c=1$. The result equation is then cast into a combination form of P-polynomials
    \begin{equation}\label{eq305}
        E\left( q \right)\equiv{{P}_{2t}}(q)-{{P}_{2x}}(q)+\frac{1}{3}{{P}_{4x}}(q)=0.
    \end{equation}
    Making a change of dependent variable 
    \begin{equation*}
        u={{q}_{2x}}=2{{\left( \ln F \right)}_{2x}}
    \end{equation*}
    and noting the proposition \ref{pr203}, Eq.(\ref{eq305}) gives the bilinear representation as follows 
    \begin{equation}\label{eq306}
        \left( D_{t}^{2}-D_{x}^{2}+\frac{1}{3}D_{x}^{4} \right)F\cdot F=0.
    \end{equation}

    \subsection{$n$-soliton solutions}
    Once the bilinear representation of Eq.(\ref{eq101}) is given, associated soliton solutions are easily solved with the help of the perturbation expansion method. Here we leave out the computational process and give the $n$-soliton solutions directly, because its verification process is complicated \cite{Nguyen}.
    \begin{equation}
         u=2{{\left[ \log \sum\limits_{{{\mu }_{i}}\in \left\{ 0,1 \right\}}{\exp \left( \sum{{{\mu }_{i}}{{\eta }_{i}}+\sum\limits_{1\le i<j}{{{\mu }_{i}}{{\mu }_{j}}{{A}_{ij}}}} \right)} \right]}_{xx}}, \quad  
     \end{equation}
     where
     ${{\eta }_{i}}={{k}_{i}}x+{{w}_{i}}t+\eta _{i}^{\left( 0 \right)}, \quad w_{i}^{2}=k_{i}^{2}\left( 1-\frac{1}{3}k_{i}^{2} \right),$ and $$ {{A}_{ij}}=\log \left| -\frac{{{\left( {{w}_{i}}-{{w}_{j}} \right)}^{2}}-{{\left( {{k}_{i}}-{{k}_{j}} \right)}^{2}}+\frac{1}{3}{{\left( {{k}_{i}}-{{k}_{j}} \right)}^{4}}}{{{\left( {{w}_{i}}+{{w}_{j}} \right)}^{2}}-{{\left( {{k}_{i}}+{{k}_{j}} \right)}^{2}}+\frac{1}{3}{{\left( {{k}_{i}}+{{k}_{j}} \right)}^{4}}} \right|, i<j, i,j=1,2,3,\cdots$$
     
     For $n=1$, the single-soliton solution of the good Boussinesq equation(\ref{eq101}) can be written as 
     \begin{equation}      
	      u = 2{{\left[ \ln \left( 1+{{e}^{{{\eta }_{1}}}} \right) \right]}_{xx}}  = \frac{k_{1}^{2}}{2}{{\operatorname{sech}}^{2}}\frac{{{\eta }_{1}}}{2}, 
    \end{equation}
    where ${{\eta }_{1}}={{k}_{1}}x+{{w}_{1}}t+\eta _{1}^{0}$, \quad $w_{1}^{2}=k_{1}^{2}\left( 1-\frac{1}{3}k_{1}^{2} \right)$.
    
    For $n=2$, the two-soliton solution reads
    \begin{equation}
        u=2{{\left[ \log \left( 1+{{e}^{{{\eta }_{1}}}}+{{e}^{{{\eta }_{2}}}}+{{e}^{{{\eta }_{1}}+{{\eta }_{2}}+{{A}_{12}}}} \right) \right]}_{xx}},
    \end{equation}
    where ${{\eta }_{i}}={{k}_{i}}x+{{w}_{i}}t+\eta _{i}^{\left( 0 \right)}$, $w_{i}^{2}=k_{i}^{2}\left( 1-\frac{1}{3}k_{i}^{2} \right)$, $i=1,2$,  ${{e}^{{{A}_{12}}}}=-\frac{{{\left( {{w}_{1}}-{{w}_{2}} \right)}^{2}}-{{\left( {{k}_{1}}-{{k}_{2}} \right)}^{2}}+\frac{1}{3}{{\left( {{k}_{1}}-{{k}_{2}} \right)}^{4}}}{{{\left( {{w}_{1}}+{{w}_{2}} \right)}^{2}}-{{\left( {{k}_{1}}+{{k}_{2}} \right)}^{2}}+\frac{1}{3}{{\left( {{k}_{1}}+{{k}_{2}} \right)}^{4}}}$.

\begin{figure}[htbp]
\centering
\includegraphics[scale=0.8]{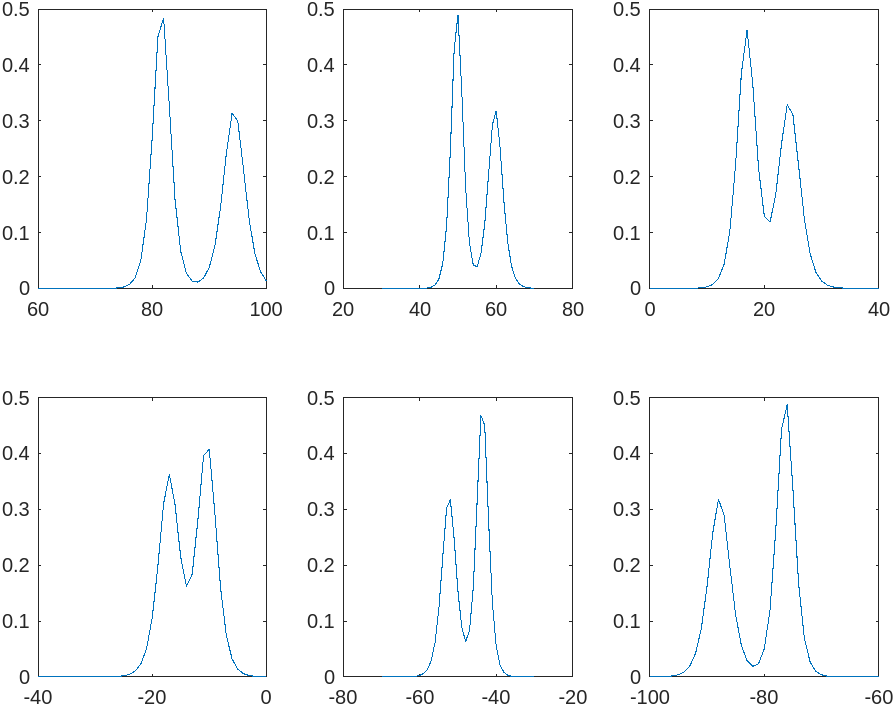}
\caption{\small The process of overtaking collision for two solitary waves at times $t=-100,-60,-20,20,60,100$}
\end{figure}
    
    For $n=3$, the three-soliton solution reads
    \begin{equation}
    \begin{aligned}
u=&2[\log(1+e^{\eta_1}++e^{\eta_2}+e^{\eta_3}+e^{\eta_1+\eta_2+A_{12}}++e^{\eta_1+\eta_3+A_{13}}++e^{\eta_2+\eta_3+A_{23}}\\
    &+e^{\eta_1+\eta_2+\eta_3+A_{12}+A_{13}+A_{23}})]_{xx}
    \end{aligned}
    \end{equation}
    where ${{\eta }_{i}}={{k}_{i}}x+{{w}_{i}}t+\eta _{i}^{\left( 0 \right)}$, $w_{i}^{2}=k_{i}^{2}\left( 1-\frac{1}{3}k_{i}^{2} \right)$, ${{A}_{ij}}=\log \left| -\frac{{{\left( {{w}_{i}}-{{w}_{j}} \right)}^{2}}-{{\left( {{k}_{i}}-{{k}_{j}} \right)}^{2}}+\frac{1}{3}{{\left( {{k}_{i}}-{{k}_{j}} \right)}^{4}}}{{{\left( {{w}_{i}}+{{w}_{j}} \right)}^{2}}-{{\left( {{k}_{i}}+{{k}_{j}} \right)}^{2}}+\frac{1}{3}{{\left( {{k}_{i}}+{{k}_{j}} \right)}^{4}}} \right|$, $i<j$,    $i,j=1,2,3$.

    \subsection{Bilinear B\"acklund Transformation and Lax Pair}
    Next, we search for the bilinear B\"acklund transformation and Lax pair of the good Boussinesq equation(\ref{eq101}). Let $\widetilde{q}=2\log F$, $q=2\log G$ be two different solutions of Eq.(\ref{eq304}), respectively, and  we introduce two new variables,
    \begin{equation}\label{311}
        w=\frac{\widetilde{q}+q}{2}=\log FG, \quad v=\frac{\widetilde{q}-q}{2}=\log \frac{F}{G},
    \end{equation}
    then we associate the two-field condition
        \begin{equation}\label{eq311}
	\begin{split}        
	      E\left( \widetilde{q} \right)-E\left( q \right) &= {{\left( \widetilde{q}-q \right)}_{2t}}-{{\left( \widetilde{q}-q \right)}_{2x}}+{{\left( \widetilde{q}-q \right)}_{2x}}{{\left( \widetilde{q}+q \right)}_{2x}}+\frac{1}{3}{{\left( \widetilde{q}-q \right)}_{4x}} \\
		&= 2{{v}_{2t}}-2{{v}_{2x}}+2{{v}_{2x}}\cdot 2{{w}_{2x}}+\frac{1}{3}\cdot 2{{v}_{4x}} \\
		&= 2{{v}_{2t}}-2{{v}_{2x}}+4{{v}_{2x}}{{w}_{2x}}+\frac{2}{3}\left[ {{\partial }_{x}}{{\mathcal{Y}}_{xxx}}\left( v,w \right)-3{{v}_{2x}}{{w}_{2x}}-3{{v}_{x}}{{w}_{3x}}-3v_{x}^{2}{{v}_{2x}} \right] \\
		&= \frac{2}{3}\left[ {{\partial }_{x}}{{\mathcal{Y}}_{xxx}}\left( v,w \right)+3{{v}_{2x}}{{w}_{2x}}-3{{v}_{x}}{{w}_{3x}}-3v_{x}^{2}{{v}_{2x}}+3{{v}_{2t}}-3{{v}_{2x}} \right] \\
            &= \frac{2}{3}{{\partial }_{x}}{{\mathcal{Y}}_{xxx}}\left( v,w \right)-2{{\partial }_{x}}{{\mathcal{Y}}_{x}}\left( v \right)+R\left( v,w \right)\\
            &= 0,
	\end{split}
    \end{equation}
    where 
    \begin{equation}
        R\left( v,w \right) = 2\operatorname{Wronskian}\left[ {{\mathcal{Y}}_{xx}}\left( v,w \right),{{\mathcal{Y}}_{x}}\left( v \right) \right]+2{{v}_{2t}}.
    \end{equation} 

    In order to decouple the two-field condition (\ref{eq311}) into a pair of constraints, we impose such a constraint which enables us to express $R\left( v,w \right)$ as the $x$-derivative of a combination of $\mathcal{Y}$-polynomials. The simplest possible choice of such a constraint may be 
    \begin{equation}\label{eq313}
        {{\mathcal{Y}}_{2x}}\left( v,w \right)+{{\mathcal{Y}}_{t}}\left( v,w \right)=\lambda,
    \end{equation}
    where $\lambda$ is an arbitrary constant, and later it will be used as the spectral parameter of the Lax pair. In terms of the constraint (\ref{eq313}), $R\left( v,w \right)$ can be expressed as 
    \begin{equation}\label{eq314}
	\begin{split}        
	      R\left( v,w \right) &= 2\left| \begin{matrix}
   \lambda -{{v}_{t}} & -{{v}_{xt}}  \\
   {{v}_{x}} & {{v}_{2x}}  \\
\end{matrix} \right|+2{{v}_{2t}} \\
		&= 2\left( \lambda {{v}_{2x}}-{{v}_{t}}{{v}_{2x}}+{{v}_{x}}{{v}_{xt}} \right)+2\left( -{{w}_{2x,t}}-2{{v}_{x}}{{v}_{xt}} \right) \\
		&= 2\left( \lambda {{v}_{2x}}-{{v}_{t}}{{v}_{2x}}-{{v}_{x}}{{v}_{xt}}-{{w}_{2x,t}} \right)  \\
		&= 2\left( \lambda {{\partial }_{x}}{{\mathcal{Y}}_{x}}\left( v,w \right)-{{\partial }_{x}}{{\mathcal{Y}}_{xt}}\left( v,w \right) \right).
	\end{split}
    \end{equation}
    Then, combining relations (\ref{eq311})-(\ref{eq314}), we deduce a coupled system of $\mathcal{Y}$-polynomials,
    \begin{align}
       & {{\mathcal{Y}}_{2x}}\left( v,w \right)+{{\mathcal{Y}}_{t}}\left( v,w \right) =\lambda, \\
       & \frac{2}{3}{{\mathcal{Y}}_{3x}}\left( v,w \right)-2{{\mathcal{Y}}_{x}}\left( v,w \right)+2\lambda {{\mathcal{Y}}_{x}}\left( v,w \right)-2{{\mathcal{Y}}_{xt}}\left( v,w \right) =0,
    \end{align}
    where the second equation is useful to construct conservation laws later. By application of the proposition \ref{pr204}, the system(3.15)-(3.16) immediately leads to the bilinear B\"acklund transformation
    \begin{align}
       & \left( D_{x}^{2}+{{D}_{t}}-\lambda  \right)F\cdot G=0, \\ 
 & \left( \frac{2}{3}D_{x}^{3}-2{{D}_{x}}+2\lambda {{D}_{x}}-2{{D}_{x}}{{D}_{t}} \right)F\cdot G=0, 
    \end{align}
    where $\lambda$ is a spectral parameter.

    By transformation $v=\ln \psi $, it follows from the formula (\ref{eq210}) and (\ref{eq211}) that 
    \begin{equation*}
    \begin{aligned}
        &{{\mathcal{Y}}_{t}}\left( v,w \right)=\frac{{{\psi }_{t}}}{\psi }, \quad {{\mathcal{Y}}_{x}}\left( v,w \right)=\frac{{{\psi }_{x}}}{\psi }, \quad {{\mathcal{Y}}_{2x}}\left( v,w \right)={{q}_{2x}}+\frac{{{\psi }_{2x}}}{\psi }, \\
        &{{\mathcal{Y}}_{3x}}\left( v,w \right)=3{{q}_{2x}}\frac{{{\psi }_{x}}}{\psi }+\frac{{{\psi }_{3x}}}{\psi }, \quad {{\mathcal{Y}}_{xt}}\left( v,w \right)={{q}_{xt}}+\frac{{{\psi }_{xt}}}{\psi },
    \end{aligned}
    \end{equation*}
    on account of which, the system(3.15)-(3.16) is then linearized into a Lax pair with one parameter $\lambda$
        \begin{align}
  & {{\psi }_{2x}}+{{\psi }_{t}}+\left( u-\lambda  \right)\psi =0, \\ 
 & {{\psi }_{3x}}+\left( 3u+3\lambda -3 \right){{\psi }_{x}}-3\partial _{x}^{-1}{{u}_{t}}\cdot \psi -3{{\psi }_{xt}}=0, 
    \end{align}
    where ${{q}_{2x}}$ has been replaced by $u$. It is easy to check that the integrability condition ${{\psi }_{2x,t}}={{\psi }_{t,2x}}$ is satisfied if $u$ is a solution of the good Boussines equation(\ref{eq101}).
     
    \subsection{Infinite Conservation Laws}
    In this section, we derive the infinitely local conservation laws for good Boussinesq equation(\ref{eq101}) by using binary Bell polynomials. In fact, the conservation laws actually have been hinted in the two-field constraint system(3.15)-(3.16), which can be rewritten in the conserved form
           \begin{align}
  & {{\mathcal{Y}}_{2x}}\left( v,w \right)+{{\mathcal{Y}}_{t}}\left( v,w \right)-\lambda =0, \\ 
 & {{\partial }_{x}}\left[ \frac{1}{3}{{\mathcal{Y}}_{3x}}\left( v,w \right)-{{\mathcal{Y}}_{x}}\left( v,w \right)+\lambda {{\mathcal{Y}}_{x}}\left( v,w \right) \right]-{{\partial }_{t}}{{\mathcal{Y}}_{2x}}\left( v,w \right)=0. 
    \end{align}

    By introducing a new potential function
    \begin{equation}
        \eta =\frac{{{\widetilde{q}}_{x}}-{{q}_{x}}}{2}.
    \end{equation}
    it follows from the relation(\ref{311}) that 
     \begin{equation}\label{eq325}
        {{w}_{x}}={{q}_{x}}+\eta , \quad {{v}_{x}}=\eta , \quad {{v}_{t}}=\partial _{x}^{-1}{{\eta }_{t}}. 
    \end{equation}
    Substituting (\ref{eq325}) into (3.22) and (3.23), we get a Riccati-type equation
     \begin{equation}\label{eq326}
        {{\eta }_{x}}+{{\eta }^{2}}+{{q}_{2x}}+\partial _{x}^{-1}{{\eta }_{t}}=\lambda, 
    \end{equation}
    and a divergence-type equation
    \begin{equation}\label{eq327}
        \frac{1}{3}{{\eta }_{3x}}+\left( 2\lambda -1 \right){{\eta }_{x}}-2{{\eta }^{2}}{{\eta }_{x}}-{{\eta }_{x}}\cdot \partial _{x}^{-1}{{\eta }_{t}}-\eta {{\eta }_{t}}+\partial _{x}^{-1}{{\eta }_{tt}}=0, 
    \end{equation}
    where we have used Eq.(\ref{eq326}) to get Eq.(\ref{eq327}).
    Suppose that $\lambda ={{\varepsilon }^{2}}$, to proceed, inserting the expansion 
    \begin{equation}\label{eq328}
        \eta =\varepsilon +\sum\limits_{n=1}^{\infty }{{{I}_{n}}\left( q,{{q}_{x}},\cdots  \right)}{{\varepsilon }^{-n}}
    \end{equation}
    into Eq.(\ref{eq326}) and equating the coefficients for power of $\varepsilon $, we then obtain the recursion relations for the conserved densities ${{I}_{n}}$
    \begin{align}
    & {{I}_{1}}=-\frac{1}{2}{{q}_{2x}}=-\frac{1}{2}u, \\ 
  & {{I}_{2}}=-\frac{1}{2}{{I}_{1,x}}-\frac{1}{2}\partial _{x}^{-1}{{I}_{1}}_{,t}=\frac{1}{4}{{u}_{x}}+\frac{1}{4}\partial _{x}^{-1}{{u}_{t}}, \\ 
  & \cdots, \nonumber \\
  & {{I}_{n+1}}=-\frac{1}{2}\left( {{I}_{n,x}}+\sum\limits_{k=1}^{n}{{{I}_{k}}\cdot {{I}_{n-k}}} \right)-\frac{1}{2}\partial _{x}^{-1}{{I}_{n,t}}.  
   \end{align}
   Substituting Eq.(\ref{eq328}) into Eq.(\ref{eq327}) provides us infinite consequence of conservation laws
   \begin{equation}
       {{I}_{n,t}}+{{F}_{n,x}}=0, \quad n=1,2,\cdots
   \end{equation}
   where the conserved densities ${{I}_{n}}\left( n=1,2,\cdots  \right)$ are given by formula (3.31) and the fluxes ${{F}_{n}}\left( n=1,2,\cdots  \right)$ are given by recursion formulas explicitly
      \begin{align}
   &{{F}_{1}}=2{{I}_{2}}=\frac{1}{2}{{u}_{x}}+\frac{1}{2}\partial _{x}^{-1}{{u}_{t}}, \\ 
  &{{F}_{2}}=-\frac{1}{3}{{I}_{1,xx}}-\left( 2\lambda -1 \right){{I}_{1}}+2\left( I_{1}^{2}+{{I}_{3}} \right)-\partial _{x}^{2}{{I}_{1,tt}} \nonumber  \\ 
  & \quad=-\frac{1}{12}{{u}_{xx}}+\left( \lambda -\frac{1}{2} \right)u+\frac{1}{4}{{u}^{2}}-\frac{1}{2}{{u}_{t}}+\frac{1}{4}\partial _{x}^{2}{{u}_{tt}}, \\
  & \cdots, \nonumber \\
  &{{F}_{n+1}}=-\frac{1}{3}{{I}_{n,xx}}-\left( 2\lambda -1 \right){{I}_{n}}+2\left[ {{I}_{n+2}}+\sum\limits_{k=1}^{n}{{{I}_{k}}\cdot {{I}_{n+1-k}}+\frac{1}{3}\sum\limits_{i+j+k=n}{{{I}_{i}}{{I}_{j}}{{I}_{k}}}} \right] \nonumber \\
 & \quad\quad+\sum\limits_{l=1}^{n-1}{{{I}_{l}}}\partial _{x}^{-1}{{I}_{n-l,t}}-\partial _{x}^{-2}{{I}_{n,tt}}.
\end{align}
We present recursion formulas for generating an infinite sequence of coservation laws; The first few conserved densities and associated fluxes are explicitly given. The first equation of conservation law is exactly the good Boussinesq equation(\ref{eq101}).

    \section{$n$ order $\operatorname{Wronskian}$ determinant solution}
    In this section, a $n$ order $\operatorname{Wronskian}$ determinant solution will be established to Eq.(\ref{eq101}) via the $\operatorname{Wronskian}$ technique. To use this technique, we adopt the following helpful notation.
    \begin{equation}\label{eq401}
        \begin{split}    
          W\left( {{\varphi }_{1}},\cdots ,{{\varphi }_{n}} \right) & =\left| \begin{matrix}
   {{\varphi }_{1}} & \varphi _{1}^{\left( 1 \right)} & \cdots  & \varphi _{1}^{\left( n-1 \right)}  \\
   {{\varphi }_{1}} & \varphi _{2}^{\left( 1 \right)} & \cdots  & \varphi _{2}^{\left( n-1 \right)}  \\
   \cdots  & \cdots  & \cdots  & \cdots   \\
   {{\varphi }_{n}} & \varphi _{n}^{\left( 1 \right)} & \cdots  & \varphi _{n}^{\left( n-1 \right)}  \\ 
\end{matrix} \right|   \\
        & = \left| \varphi ,{{\varphi }^{\left( 1 \right)}},\cdots ,{{\varphi }^{\left( n-1 \right)}} \right|  \\
        & = \left| 0,1,\cdots ,n-1 \right| \\
        & =\left| \widehat{n-1} \right|,
        \end{split}
      \end{equation}
      where $\varphi _{j}^{(k)}=\frac{{{\partial }^{k}}{{\varphi }_{j}}}{\partial {{x}^{k}}}$, $\varphi ={{\left( {{\varphi }_{1}},\cdots ,{{\varphi }_{n}} \right)}^{T}}$, ${{\varphi }^{\left( k \right)}}={{\left( \varphi _{1}^{\left( k \right)},\cdots ,\varphi _{n}^{\left( k \right)} \right)}^{T}}$, $k=1,2,\cdots ,n-1$.
      
      In order to obtain the $\operatorname{Wronskian}$ solution, the following lemmas are needed.
      
      \begin{lemma}
          \begin{equation}
        \left| M,a,b \right|\cdot \left| M,c,d \right|-\left| M,a,c \right|\cdot \left| M,b,d \right|+\left| M,a,d \right|\cdot \left| M,b,c \right|=0,
    \end{equation}
     where M is an $n\times \left( n-2 \right)$ matrix, $a, b, c$ and $d$ represent $n$ column vectors.
      \end{lemma}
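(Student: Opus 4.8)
The asserted identity is the classical three–term Plücker relation for $n\times n$ determinants (a form of the Jacobi/Sylvester determinant identity; see e.g. \cite{Hirota,FN}), and the route I would take is to normalize $M$ so that the statement reduces to a trivial $2\times2$ one. Write $m_1,\dots,m_{n-2}$ for the columns of $M$, so that each factor above is $|M,x,y|=\det[m_1,\dots,m_{n-2},x,y]$.

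The first step is an invariance remark. For any invertible $n\times n$ matrix $T$, replacing $m_1,\dots,m_{n-2},a,b,c,d$ simultaneously by $Tm_1,\dots,Tm_{n-2},Ta,Tb,Tc,Td$ turns each $|M,x,y|$ into $\det(T)\,|M,x,y|$, because $[Tm_1,\dots,Tm_{n-2},Tx,Ty]=T\,[m_1,\dots,m_{n-2},x,y]$; hence every product on the left is multiplied by $\det(T)^2$, so the identity is invariant under all such substitutions. The second step disposes of the degenerate case: if the columns of $M$ are linearly dependent, then in each of the six matrices $[m_1,\dots,m_{n-2},x,y]$ the first $n-2$ columns are already dependent, all six determinants vanish, and the identity holds trivially.

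It therefore remains to treat $\operatorname{rank}M=n-2$. Then $m_1,\dots,m_{n-2}$ extend to a basis, so there is an invertible $T$ with $Tm_i=e_i$ for $i=1,\dots,n-2$, and by the invariance I may assume $M=[e_1,\dots,e_{n-2}]$. Expanding $\det[e_1,\dots,e_{n-2},x,y]$ along its first $n-2$ rows gives $|M,x,y|=x_{n-1}y_n-x_n y_{n-1}$, so with the abbreviation $\bar v=(v_{n-1},v_n)^{T}$ the identity collapses to
\begin{equation*}
\det[\bar a,\bar b]\,\det[\bar c,\bar d]-\det[\bar a,\bar c]\,\det[\bar b,\bar d]+\det[\bar a,\bar d]\,\det[\bar b,\bar c]=0 ,
\end{equation*}
a polynomial identity in eight scalars which I would close by direct expansion (the twelve monomials cancel in pairs), or more conceptually by noting that $\bar a,\bar b,\bar c,\bar d$ are four vectors in a plane, hence linearly dependent.

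There is no genuine obstacle here — the content is entirely classical — and the only points needing care are the two preliminary reductions above (the $\det(T)^2$–invariance of the left-hand side and the separate, trivial rank-deficient case). For completeness I would mention the alternative, normalization-free derivation: form the $2n\times2n$ matrix whose first $n$ columns are the stacked vectors $\binom{m_1}{m_1},\dots,\binom{m_{n-2}}{m_{n-2}},\binom{a}{a},\binom{b}{b}$ and whose last $n$ columns are $\binom{m_1}{m_1},\dots,\binom{m_{n-2}}{m_{n-2}},\binom{c}{c},\binom{d}{d}$; this matrix has the column $\binom{m_i}{m_i}$ repeated in positions $i$ and $n+i$, so its determinant is $0$, and a Laplace expansion along its first $n$ rows collapses — after the attendant sign bookkeeping — to a nonzero multiple of the left-hand side of the stated identity. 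The sole nuisance of that route is exactly the sign bookkeeping in the Laplace expansion, which is why I would prefer the reduction argument above.
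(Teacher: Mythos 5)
Your argument is correct and complete. Note, however, that the paper itself offers no proof of this lemma at all: it is stated bare, as the classical three-term Pl\"ucker relation, and is then only \emph{used} (in the final display of the proof of Theorem 4.3, where $\Delta$ is recognized as an instance of it). So there is nothing in the paper to compare your proof against; what you have supplied is a self-contained justification of a fact the authors take as known. Your route --- observe the $\det(T)^2$-covariance of the left-hand side under a simultaneous change of basis, dispose of the rank-deficient case where all six determinants vanish, normalize $M$ to $[e_1,\dots,e_{n-2}]$ so each factor collapses to the $2\times 2$ minor $x_{n-1}y_n-x_ny_{n-1}$, and verify the resulting twelve-monomial cancellation --- is sound at every step; the only point worth making explicit is that in the Laplace expansion of $\det[e_1,\dots,e_{n-2},x,y]$ along the first $n-2$ rows the off-diagonal terms die because each $e_j$ restricted to rows $n-1,n$ is the zero vector. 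Your closing remark about the $2n\times 2n$ Laplace-expansion derivation is the proof one usually finds in the Wronskian-technique literature the paper cites (Freeman--Nimmo, Hirota), so either route would serve; the reduction argument you favor has the advantage of making the sign pattern $+,-,+$ visible in the elementary $2\times 2$ identity rather than buried in Laplace-expansion sign conventions.
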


      \begin{lemma}
          \begin{equation}\label{eq403}
            \sum\limits_{k=1}^{n}{{{\left| A \right|}_{kl}}}=\sum\limits_{k=1}^{n}{{{\left| A \right|}^{kl}}=}\sum\limits_{i,j=1}^{n}{{{A}_{ij}}\frac{{{\partial }^{l}}{{a}_{ij}}}{\partial {{x}^{l}}}},
        \end{equation}
        where $A={{\left( {{a}_{ij}} \right)}_{n\times n}}$, and ${{\left| A \right|}_{kl}}$, ${{\left| A \right|}^{kl}}$ and ${{A}_{ij}}$ denote the determinant resulting from $\left| A \right|$ with its $k$th row differentiated l times with respect to $x$,  the determinant resulting from $\left| A \right|$ with its $k$th column differentiated l times with respect to $x$, and the co-factor of ${{a}_{ij}}$, respectively.
      \end{lemma}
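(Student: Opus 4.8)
The plan is to deduce the identity (\ref{eq403}) directly from the Laplace (cofactor) expansion of a determinant, together with the elementary observation that the cofactor $A_{ij}$ of the entry $a_{ij}$ is built entirely from the entries of $A$ lying outside the $i$th row and the $j$th column. In particular $A_{ij}$ is left unchanged when one differentiates the $i$th row of $A$ (equivalently, its $j$th column) with respect to $x$, which is the only fact the proof really uses.

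First I would fix an index $k$ and expand $\left| A \right|_{kl}$ along its $k$th row. By definition $\left| A \right|_{kl}$ is obtained from $\left| A \right|$ by replacing the $k$th row $\left( a_{k1},\ldots,a_{kn} \right)$ with $\left( \partial_x^{l}a_{k1},\ldots,\partial_x^{l}a_{kn} \right)$, whereas every minor attached to that row involves only the remaining $n-1$ rows and therefore agrees with the corresponding minor of $\left| A \right|$. By multilinearity of the determinant in its rows this gives
\begin{equation*}
   \left| A \right|_{kl}=\sum_{j=1}^{n}A_{kj}\,\frac{\partial^{l}a_{kj}}{\partial x^{l}}.
\end{equation*}
Summing over $k=1,\ldots,n$ and renaming the outer index $k$ as $i$ yields $\sum_{k=1}^{n}\left| A \right|_{kl}=\sum_{i,j=1}^{n}A_{ij}\,\frac{\partial^{l}a_{ij}}{\partial x^{l}}$, which is the right-hand equality in (\ref{eq403}).

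For the column statement I would repeat the transposed argument: expanding $\left| A \right|^{kl}$ along its $k$th column and using that the cofactors $A_{1k},\ldots,A_{nk}$ do not involve the $k$th column, one obtains $\left| A \right|^{kl}=\sum_{i=1}^{n}A_{ik}\,\frac{\partial^{l}a_{ik}}{\partial x^{l}}$; summing over $k$ again produces $\sum_{i,j=1}^{n}A_{ij}\,\frac{\partial^{l}a_{ij}}{\partial x^{l}}$. Hence all three expressions in (\ref{eq403}) coincide.

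I do not anticipate any genuine obstacle: the entire content is the invariance of $A_{ij}$ under differentiation of the $i$th row or $j$th column, which is precisely what lets the operator $\partial_x^{l}$ be moved outside each minor in the cofactor expansion. No induction on $l$ is required — the general case follows from the same single expansion as the familiar product-rule identity $\sum_{i,j=1}^{n}A_{ij}\,\frac{\partial a_{ij}}{\partial x}=\sum_{k=1}^{n}\left| A \right|_{k1}$ corresponding to $l=1$.
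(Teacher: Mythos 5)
Your proof is correct: the cofactor $A_{kj}$ involves only rows other than the $k$th, so expanding $\left|A\right|_{kl}$ along its $k$th row gives $\left|A\right|_{kl}=\sum_{j=1}^{n}A_{kj}\,\partial_{x}^{l}a_{kj}$, and summing over $k$ (and the transposed argument for columns) yields all three equalities. Note that the paper itself states this lemma without proof, importing it from the standard Wronskian-technique literature (Freeman--Nimmo, Ma--You), so there is no in-paper argument to compare against; your cofactor-expansion derivation is the standard and complete justification, and you are right that no induction on $l$ is needed since $l$ enters only through the entries in the differentiated row or column.
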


      In particular, choose
      \begin{equation*}
        \left| A \right|=\left| \widehat{n-1} \right|=\left| \varphi ,{{\varphi }^{\left( 1 \right)}},\cdots ,{{\varphi }^{\left( n-1 \right)}} \right|=\left| 0,1,\cdots ,n-1 \right|=\left| \begin{matrix}
   \varphi _{1}^{\left( 0 \right)} & \varphi _{1}^{\left( 1 \right)} & \cdots  & \varphi _{1}^{\left( n-1 \right)}  \\
   \vdots  & \vdots  & \vdots  & \vdots   \\
   \varphi _{i}^{\left( 0 \right)} & \varphi _{i}^{\left( 1 \right)} & \cdots  & \varphi _{i}^{\left( n-1 \right)}  \\
   \vdots  & \vdots  & \vdots  & \vdots   \\
   \varphi _{n}^{\left( 0 \right)} & \varphi _{n}^{\left( 1 \right)} & \cdots  & \varphi _{n}^{\left( n-1 \right)}  \\
\end{matrix} \right|,
    \end{equation*}
    and use the equality(\ref{eq403}) with $l=3$, then we obtain the equality as follows.
      \begin{equation}\label{eq404}
       \left| \widehat{n-4},n-2,n-1,n \right|-\left| \widehat{n-3},n-1,n+1 \right|+\left| \widehat{n-2},n+2 \right|-\frac{3}{4}\left| \widehat{n-2},n \right|=0.
   \end{equation}
   Differentiating with respect to x yields 
   \begin{equation}\label{eq405}
   \begin{aligned}
       \left| \widehat{n-5},n-3,n-2,n-1,n \right|-\left| \widehat{n-3},n,n+1 \right|  +\left| \widehat{n-2},n+3 \right|
      \\  -\frac{3}{4}\left| \widehat{n-3},n-1,n \right|
       -\frac{3}{4}\left| \widehat{n-2},n+1 \right|=0.
       \end{aligned}
   \end{equation}
   The identities (\ref{eq404}) and (\ref{eq405}) are very useful in constructing the $\operatorname{Wronskian}$solution of the equation (\ref{eq101}).

   Next, we construct $\operatorname{Wronskian}$ condition of good Boussinesq equation (\ref{eq101}) by virtue of the Lax pairs (3.20) and (3.21) obtained in section 3.3. We choose $u=0$, $\lambda=0$, then the Lax pair (3.20) and (3.21) reduces to
   \begin{equation}\label{eq406}
        {{\psi }_{t}}=-{{\psi }_{2x}}, \quad {{\psi }_{3x}}=\frac{3}{4}{{\psi }_{x}}.
    \end{equation}
    Therefore, choose ${{\psi }_{j}}\in {{C}^{\infty }}\left( \Omega  \right)$ satisfying 
    \begin{equation}\label{eq407}
        {{\psi }_{j,t}}=-{{\psi }_{j,xx}}, \quad{{\psi }_{j,xxx}}=\frac{3}{4}{{\psi }_{j,x}}, \quad j=1,2,\cdots ,n.
    \end{equation}

    \begin{theorem}
      Assume that a group of function ${{\psi }_{j}}\left( x,t \right)$, $j=1,2,\cdots ,n$, satisfies 
      \begin{equation}
        {{\psi }_{j,t}}=-{{\psi }_{j,xx}}, \quad{{\psi }_{j,xxx}}=\frac{3}{4}{{\psi }_{j,x}}, \quad j=1,2,\cdots ,n,
    \end{equation}
    simultaneously. Then $F=W\left( {{\psi }_{1}},{{\psi }_{2}},\cdots ,{{\psi }_{n}} \right)$ defined by (\ref{eq401}) solves the bilinear Boussinesq equation(\ref{eq306}).
    \end{theorem}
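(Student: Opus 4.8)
The plan is to substitute $F = W(\psi_1,\dots,\psi_n)$ directly into the bilinear form (3.6), written out as
\[
F_{2t}F - F_t^2 - F_{2x}F + F_x^2 + \tfrac{1}{3}\bigl(F_{4x}F - 4F_{3x}F_x + 3F_{2x}^2\bigr) = 0,
\]
and to express every one of these derivatives of the Wronskian as a sum of Wronskian-type determinants $|\widehat{n-k},\dots|$ by repeatedly applying Lemma~4.3 (the identity \eqref{eq403}) together with the reduction conditions \eqref{eq407}. The key point is that the condition $\psi_{j,xxx} = \tfrac34\psi_{j,x}$ lets us eliminate the column $\varphi^{(3)}$ whenever it would appear, collapsing it into $\tfrac34\varphi^{(1)}$; this is precisely what produces the identities \eqref{eq404} and \eqref{eq405}, which I would take as the workhorses. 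Likewise $\psi_{j,t} = -\psi_{j,xx}$ converts every $t$-derivative of $F$ into combinations of $x$-derivatives: for instance $F_t = -|\widehat{n-2},n|$, and $F_{2t}$ and $F_t^2$ become expressible entirely in the $|\widehat{n-k},\ldots|$ notation after one more differentiation and another use of the cubic reduction.

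First I would record the basic derivative formulas for a Wronskian, $F_x = |\widehat{n-2},n|$, $F_{2x} = |\widehat{n-3},n-1,n+1| + |\widehat{n-2},n+1|$, and so on up to $F_{4x}$, using Lemma~4.3 in the forms with $l=1,2,3,4$ and then simplifying each occurrence of a third derivative via \eqref{eq407}. Next I would compute the time derivatives: $F_t = -|\widehat{n-2},n|$ follows from \eqref{eq403} with $l=2$ after using $\psi_{j,t}=-\psi_{j,xx}$, and $F_{2t}$ follows by differentiating once more in $t$ (equivalently, applying the same substitution twice). Assembling all pieces, the left-hand side of the bilinear equation becomes a bilinear expression in determinants of the shape $|\widehat{n-r},\dots|$; the goal is then to show this expression is identically zero.

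The main obstacle — as always with the Wronskian technique — is the bookkeeping in this final cancellation: after substitution one is left with a sum of products of determinants that must be shown to vanish, and the vanishing is not term-by-term but relies on the Plücker-type quadratic relation in Lemma~4.2 (and its consequences \eqref{eq404}, \eqref{eq405}). The strategy is to group the terms so that each group matches the left-hand side of \eqref{eq404} or \eqref{eq405} (or a derivative/shifted version thereof) and hence drops out. Concretely, the $F_{2x}F - F_x^2$ part and the dispersive part $\tfrac13(F_{4x}F - 4F_{3x}F_x + 3F_{2x}^2)$ should each reduce, via Lemma~4.2, to multiples of the determinant identities already established, and the $F_{2t}F - F_t^2$ part — being built from the same $x$-derivative determinants after the substitution $\psi_{j,t}=-\psi_{j,xx}$ — supplies exactly the remaining terms needed to close the identity. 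I would therefore organize the proof as: (i) derivative dictionary, (ii) the two auxiliary identities \eqref{eq404}–\eqref{eq405}, already given, (iii) substitution into (3.6), (iv) regrouping and term-by-term cancellation against \eqref{eq404}–\eqref{eq405} via Lemma~4.2, concluding that $F=W(\psi_1,\dots,\psi_n)$ solves \eqref{eq306}.
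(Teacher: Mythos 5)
Your overall strategy is exactly the paper's: write the bilinear form as a quadratic expression in $F$ and its derivatives, build a dictionary of Wronskian derivative formulas, use $\psi_{j,xxx}=\tfrac34\psi_{j,x}$ and $\psi_{j,t}=-\psi_{j,xx}$ to reduce everything to determinants of the form $\left|\widehat{n-r},\dots\right|$, and collapse the result via \eqref{eq404}, \eqref{eq405} and the Pl\"ucker relation of Lemma~4.2. However, the two explicit formulas you actually wrote down are both wrong, and the errors are not cosmetic. The correct second $x$-derivative is $F_{2x}=\left|\widehat{n-3},n-1,n\right|+\left|\widehat{n-2},n+1\right|$, not $\left|\widehat{n-3},n-1,n+1\right|+\left|\widehat{n-2},n+1\right|$. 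More seriously, $F_t\neq-\left|\widehat{n-2},n\right|$: differentiating each column in $t$ and using $\partial_t\varphi^{(j)}=-\varphi^{(j+2)}$, only the last two columns survive, giving
\begin{equation*}
F_t=\left|\widehat{n-3},n-1,n\right|-\left|\widehat{n-2},n+1\right|,
\end{equation*}
which is a two-term expression, not $-F_x$. If you carry your version forward, $F_t^2=F_x^2$ and the terms $-F_t^2+F_x^2$ cancel outright, whereas in the correct computation the cross term of $F_t^2$ against the cross term of $F_{2x}^2$ produces $8\left|\widehat{n-3},n-1,n\right|\cdot\left|\widehat{n-2},n+1\right|$, which is one of the three products needed to assemble the final Pl\"ucker identity. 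With your $F_t$ the cancellation cannot close.

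Beyond that, the proposal defers precisely the step that constitutes the proof: the regrouping of the substituted expression into the three-product combination
$8\left|\widehat{n-1}\right|\cdot\left|\widehat{n-3},n,n+1\right|+8\left|\widehat{n-3},n-1,n\right|\cdot\left|\widehat{n-2},n+1\right|-8\left|\widehat{n-2},n\right|\cdot\left|\widehat{n-3},n-1,n+1\right|$
that Lemma~4.2 annihilates. Saying the terms ``should each reduce'' to the auxiliary identities is a statement of hope, not an argument; in the paper the identities \eqref{eq404} and \eqref{eq405} are used to eliminate the leftover single products multiplying $\left|\widehat{n-1}\right|$ and $\left|\widehat{n-2},n\right|$, and only then does the residue take the Pl\"ucker form. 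You need to exhibit that bookkeeping explicitly, starting from corrected derivative formulas, for the proof to stand.
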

    \begin{proof}
        It only needs to prove that $F=W\left( {{\psi }_{1}},{{\psi }_{2}},\cdots ,{{\psi }_{n}} \right)$ solves bilinear equation(\ref{eq306}). To complete the proof, the bilinear equation(\ref{eq306}) can be reduced to determinant identity. Therefore, the bilinear equation(\ref{eq306}) is rewritten as
         \begin{equation}\label{eq409}
        \Delta \equiv 2{{F}_{2t}}F-2F_{t}^{2}-2F{{F}_{2x}}+2F_{x}^{2}+\frac{2}{3}F{{F}_{4x}}-\frac{8}{3}{{F}_{x}}{{F}_{3x}}+2F_{2x}^{2}=0.
    \end{equation}
    Using the property of $\operatorname{Wronskian}$ determinant, calculate the derivatives required by the equation(\ref{eq409})
                \begin{align*}
  & {{F}_{x}}=\left| \widehat{n-2},n \right|, \\ 
 & {{F}_{2x}}=\left| \widehat{n-3},n-1,n \right|+\left| \widehat{n-2},n+1 \right|, \\ 
 & {{F}_{3x}}=\left| \widehat{n-4},n-2,n-1,n \right|+2\left| \widehat{n-3},n-1,n+1 \right|+\left| \widehat{n-2},n+2 \right|, \\ 
 & {{F}_{4x}}=\left| \widehat{n-5},n-3,n-2,n-1,n \right|+3\left| \widehat{n-4},n-2,n-1,n+1 \right|+2\left| \widehat{n-3},n,n+1 \right| \\
 &\quad\quad\quad +3\left| \widehat{n-3},n-1,n+2 \right|+\left| \widehat{n-2},n+3 \right|.
              \end{align*}
        Using the formula (\ref{eq406}), we can get
        \begin{equation*}
            {{F}_{t}}= \left| n-3,n-1,n \right|-\left| n-2n+1 \right|,
        \end{equation*}      
    \begin{equation*}
    \begin{aligned}
        {{F}_{tt}}=&\left| n-5,n-3,n-2,n-1,n \right|-\left| n-4,n-2,n-1,n+1 \right|+2\left| n-3,n,n+1 \right| \\
  &-\left| n-3,n-1,n+2 \right|+\left| n-2,n+3 \right|. 
  \end{aligned}
    \end{equation*}
    Substitute the above equations into the left hand of the formula (\ref{eq409}), we can get
            \begin{equation}
     \begin{split}        
	      \Delta  =&2F\left( {{F}_{2t}}-{{F}_{2x}}+\frac{1}{3}{{F}_{4x}} \right)-2F_{t}^{2}+2{{F}_{x}}\left( {{F}_{x}}-\frac{4}{3}{{F}_{3x}} \right)+2F_{2x}^{2} \\ 
   =&2\left| n-1 \right|\left( \left| n-5,n-3,n-2,n-1,n \right|-\left| n-4,n-2,n-1,n+1 \right|+2\left| n-3,n,n+1 \right| \right. \\ 
 & -\left| n-3,n-1,n+2 \right|+\left| n-2,n+3 \right|-\left| n-3,n-1,n \right|-\left| n-2,n+1 \right| \\
 &+\frac{1}{3}\left| n-5,n-3,n-2,n-1,n \right|  
  +\left| n-4,n-2,n-1,n+1 \right|+\frac{2}{3}\left| n-3,n,n+1 \right|\\
  &+\left| n-3,n-1,n+2 \right|+ \frac{1}{3}\left| n-2,n+3 \right| ) -2{{\left( \left| n-3,n-1,n \right|-\left| n-2,n+1 \right| \right)}^{2}} \\
  &+2\left| n-2,n \right|( \left| n-2,n \right|
  -\frac{4}{3}\left| n-4,n-2,n-1,n \right| )  
  -\frac{8}{3}\left| n-3,n-1,n+1 \right|\\
  &- \frac{4}{3}\left| n-2,n+2 \right| ) +2{{\left( \left| n-3,n-1,n \right|+\left| n-2,n+1 \right| \right)}^{2}}  \\
  =&2\left| n-1 \right|( \frac{4}{3} \left| n-5,n-3,n-2,n-1,n \right|+\frac{8}{3}\left| n-3,n,n+1 \right|+\frac{4}{3}\left| n-2,n+3 \right| \\
  &-\left| n-3,n-1,n \right|  
  \left. -\left| n-2,n+1 \right| \right)+8\left| n-3,n-1,n \right|\cdot \left| n-2,n+1 \right|+2\left| n-2,n \right|\\
  &\left( \left| n-2,n \right| \right.
  -\frac{4}{3}\left| n-4,n-2,n-1,n \right|  
  -\frac{8}{3}\left| n-3,n-1,n+1 \right|-\frac{4}{3}\left| n-2,n+2 \right| ).  
    \end{split}
    \end{equation}
    Using the identities (\ref{eq404}) and (\ref{eq405}), the following Pl\"ucker relation can be obtained
         \begin{equation}
	\begin{split}        
	      \Delta =& 8\left| n-1 \right|\cdot \left| n-3,n,n+1 \right|+8\left| n-3,n-1,n \right|\cdot \left| n-2,n+1 \right|\\
       &-8\left| n-2,n \right|\cdot \left| n-3,n-1,n+1 \right| \\ 
  =&4\left| \begin{matrix}
   n-3 & 0 & n-2 & n-1 & n & n+1  \\
   0 & n-3 & n-2 & n-1 & n & n+1  \\
\end{matrix} \right| \\ 
  \equiv& 0.
		\end{split}
    \end{equation}
    \end{proof}
    Theorem 4.3 tells us that if a group of functions ${{\psi }_{j}}\left( x,t \right)$ ($j=1,2,\cdots ,n$) satisfy conditions in (\ref{eq407}), then we can get a solution $F=W\left( {{\psi }_{1}},{{\psi }_{2}},\cdots ,{{\psi }_{n}} \right)$ to the bilinear Boussinesq equation(\ref{eq306}). Thus, the $\operatorname{Wronskian}$ determinant solution of the original equation (\ref{eq101}) is 
    \begin{equation}
        u=2{{\left( \log F \right)}_{xx}}=2{{\left( \log W\left( {{\psi }_{1}},{{\psi }_{2}},\cdots ,{{\psi }_{n}} \right) \right)}_{xx}}.
    \end{equation}
      
\section{Concluding remarks}
    The present investigation has been carried out on the good Boussinesq equation (\ref{eq101}). With binary Bell polynomials, the corresponding bilinear representation, bilinear B\"acklund transformation, Lax pair and infinite conservation laws are obtained directly and naturally. Combined with the perturbation expansion method, the $n$-soliton solution has also been derived.

    Finally, inspired by the construction of $\operatorname{Wronskian}$ condition for the KdV equation, this paper tries to construct the $\operatorname{Wronskian}$ condition for the good Boussinesq equation from the Bell polynomials. We believe that there are still many deep relations between Bell polynomials and $\operatorname{Wronskian}$ technique, which still remain open and become an issue for future research to explore.

   \section*{Acknowledgments}
    We express our sincere thanks to Prof.E.G. Fan and Prof.Y.Chen for his valuable guidence and advice.
   {\linespread{1.0} \selectfont

\addcontentsline{toc}{section}{References}

}

\end{document}